  \providecommand\BibTeX{{%
    \normalfont B\kern-0.5em{\scshape i\kern-0.25em b}\kern-0.8em\TeX}}}
\newtheorem{theorem}{Theorem}
\newtheorem{problemDefinition}{Problem definition}
\newtheorem{definition}{Definition}
\newtheorem{property}{Property}
\newtheorem{strategy}{Strategy}
\newtheorem{example}{Example}
\newcommand{\FBA}{\textsf{FBA}}
\newcommand{\FCA}{\textsf{FCA}}
\newcommand{\DFBA}{\textsf{DFBA}}
\newcommand{\spara}[1]{\smallskip\noindent{\bf #1}}
\DeclareMathOperator*{\argmin}{argmin} 
\DeclareMathOperator*{\argmax}{argmax} 
\begin{document}

\title{Effective and Efficient Core Computation in Signed Networks}

\author{Junghoon Kim}
\affiliation{%
  \institution{UNIST}
  \country{South Korea}}
\email{junghoon.kim@unist.ac.kr}

\author{Sungsu Lim}
\affiliation{%
  \institution{Chungnam National University}
  \country{South Korea}}
\email{sungsu@cnu.ac.kr}

\author{Jungeun Kim}
\affiliation{%
  \institution{Kongju National University}
  \country{South Korea}
}
\email{jekim@kongju.ac.kr}

\renewcommand{\shortauthors}{Kim et al.}

\begin{abstract}
With the proliferation of mobile technology and IT development, people can use social network services at any place and anytime. Among many social network mining problems, identifying cohesive subgraphs attract many attentions from different fields due to its numerous applications. Among many cohesive subgraph models, $k$-core is the most widely used model due to its simple and intuitive structure. 
In this paper, we formulate $(p,n)$-core in signed networks by extending $k$-core. $(p,n)$-core simultaneously guarantees the sufficient internal positive edges and deficient internal negative edges. We formally prove that finding an exact  $(p,n)$-core is NP-hard. Hence, we propose three efficient and effective algorithms to find a solution. Using real-world and synthetic networks, we demonstrate the superiority of our proposed algorithms. 
\end{abstract}

\maketitle

\section{Introduction}

With the recent rapid and prolific  development of mobile and communication technology, people can use online social networking services at any place and anytime.
Understanding online social networks helps us  understand human complex relationships; thus, many researchers are trying to analyse social networks by capturing their characteristics~\cite{scott1988social}. 
There are several interesting properties in social networks including degree distribution~\cite{barabasi1999emergence}, community structure~\cite{girvan2002community}, node influence~\cite{travenccolo2008accessibility},  small diameter~\cite{watts1998collective} and so on. Among them, mining of \textit{cohesive subgraphs} recently has received much attention from different fields due to its attractive applications even though there is no formal definition of a cohesive subgraph.

These days, many social networks consist of a social network and meta-data information named attributes, such as user or link information. These social network with attribute information are called \textit{attributed social networks}~\cite{chunaev2020community}. 
Attributed social networks include location-based social networks~\cite{kim2020densely}, keyword-based social networks~\cite{fang2016effective}, event-based social networks~\cite{feng2014search}, multi-layer graphs~\cite{kim2015community}, and uncertainty graphs~\cite{adar2007managing}, etc. In this study, we study a signed network~\cite{tang2016survey}. A signed network contains a set of nodes and edges in which each edge has either a positive (\textquotesingle+\textquotesingle) or a negative (\textquotesingle-\textquotesingle) sign. This sign can be interpreted as the relationship between two people, that is, a positive edge indicates that two users (or entities) are close; a negative edge implies that two users (or entities) are not in a good relationship. These signed networks have many applications including social psychology~\cite{antal2006social}, physics~\cite{facchetti2011computing}, correlation clustering~\cite{bansal2004correlation}, network robustness~\cite{Dey2021Network}, and link recommendation~\cite{tang2015negative}. 

In this paper, we study the cohesive subgraphs discovery problem~\cite{seidman1983network} in signed networks. Specifically, we extend the classic $k$-core on signed networks by considering two distinct edge types. 
The $k$-core is a representative cohesive subgraph model and is  widely used for social network analysis. It is utilised for graph clustering~\cite{cheng2013local}, graph visualisation~\cite{alvarez2006large}, community search~\cite{sozio2010community}, and identifying influential nodes~\cite{bae2014identifying}.  
There are many variations~\cite{malliaros2020core} of the $k$-core in different domains such as distance-generalised $k$-core~\cite{bonchi2019distance}, bipartite $(\alpha,\beta)$-core~\cite{ding2017efficient}, radius-bounded cores~\cite{wang2018efficient} and so on. Recently, Giatsidis et al.~\cite{giatsidis2014quantifying} extended classic $k$-core for directed signed networks. However, the proposed problem~\cite{giatsidis2014quantifying} does not consider the internal negative edges of the resultant subgraphs. It only guarantees sufficient incoming internal positive edges and sufficient outgoing external negative edges. Hence, \cite{giatsidis2014quantifying} cannot be directly utilised for some applications such as team formation or group recommendation because the resultant subgraphs may contain many negative edges among users. This implies that the identified subgraph contains many negative internal edges; thus, this indicates that the quality of the resultant subgraph is not guaranteed. 

Instead of extending $k$-core, several approaches~\cite{wu2020maximum,li2018efficient} for finding cohesive subgraphs in signed networks have been proposed, including $k$-truss based~\cite{wu2020maximum} and clique based~\cite{li2018efficient} approaches. Even if both approaches have more cohesiveness compared with $k$-core~\cite{fang2020survey}, selecting the proper parameter is very challenging to end-users. For example, to find an appropriate parameter $k$ for a $k$-truss model, the end-users may need to understand the concept of the triangle and adjacent triangles. Similarly, the clique-based method also requires users to understand the model in order to select appropriate parameters. On the other hand, 
our extended $k$-core based model is very simple and intuitive to end-users; thus, selecting appropriate parameters is relatively easier than other approaches. 

In this paper, by preserving the simplicity and intuitive structure of the $k$-core in networks, we propose a new signed core problem named $(p,n)$-core in signed networks. 
$(p,n)$-core inherits the traditional $k$-core. More specifically, $(p,n)$-core aims to find a maximal subgraph while guaranteeing sufficient internal positive edges and deficient internal negative edges of a resultant subgraph. 

Our problem has many applications by directly extending the applications of the classic $k$-core such as 
\textit{network analysis}~\cite{seidman1983network}, 
\textit{community search}~\cite{sozio2010community}, 
\textit{visualisation}~\cite{alvarez2006large},  
\textit{similarity}~\cite{nikolentzos2018degeneracy}, and
\textit{anomaly detection}~\cite{shin2018patterns}. 

For example, $k$-core can be utilised for the community search problem. The community search problem is defined as follows: given a graph $G$ and a set of query nodes $Q$, it aims to find a community $C$ containing all the query nodes $Q$ while $C$ is connected and satisfies some constraints. Even if there are several community search models, the $k$-core based CS (community search) model is most widely used due to its simple and intuitive structure. It finds a maximal subgraph containing all the query nodes while the minimum degree is maximised~\cite{sozio2010community}. We can simply extend $(p,n)$-core for the community search problem by finding a connected component that contains all the query nodes while maximising the positive edges by $p$ or minimising the negative edge by $n$. In Section~\ref{sec:experiments}, we present the result of the community search by utilising our problem.

In summary, we focus on the $k$-core problem in signed networks by unifying two key concepts. Each component is as follows. 

\begin{enumerate}[noitemsep,nolistsep,leftmargin=*]
    \item \textit{Considering positive and negative edge constraints}: We consider both positive and negative edge constraints to find $(p,n)$-core to guarantee the sufficient internal positive edges and deficient internal negative edges.
    \item \textit{Maximality} : We find a maximal subgraph as a result. In Section~\ref{sec:problem}, we present that the result of $(p,n)$-core is not unique. 
\end{enumerate}

\subsection{Challenges}
Since finding an exact $(p,n)$-core  is NP-hard (See Section~\ref{sec:problem}), we have two primary challenges: 
\begin{enumerate}
    \item \textit{Challenge 1 : effectiveness}: how do we identify the effective solution?
    \item \textit{Challenge 2 :  efficiency}: how do we identify scalable (efficient) solutions?
\end{enumerate}
To handle both challenges, we propose three algorithms which are summarised in Table~\ref{tab:sum}. 
Note that the notation \textit{followers} implies a set of nodes that would be deleted together when we remove a node. 
\begin{table}[ht]
\caption{Summary of proposed algorithms}
\label{tab:sum}
\centering
\begin{tabular}{c|c|c|c}
\hline
              & {\FBA} & {\DFBA} & {\FCA} \\ \hline \hline
effectiveness &  $\bigstar $   &  $\bigstar \bigstar \bigstar \bigstar$   & $\bigstar $ \\ \hline
efficiency    &  $\bigstar $   &  $\bigstar \bigstar$   &  $\bigstar \bigstar \bigstar \bigstar$    \\ \hline \hline
\end{tabular}
\end{table}

A follower-based algorithm ({\FBA}) is designed to compute the exact size of the followers after computing the candidate nodes. Even if {\FBA} puts effectiveness as a major concern, it fails to preserve effectiveness and efficiency since it does not consider the importance of the nodes in the negative graph and  is required to compute all the followers of the nodes.     

Next, we propose a new improved {\FBA} named a disgruntled follower-based algorithm (\DFBA) by improving efficiency and effectiveness. For {\DFBA}, we have two strategies: (1) \textit{Propose pruning strategy} : to avoid computing all the sizes of the followers, we propose a new pruning strategy to improve efficiency; and (2) \textit{Set a new objective function} : by considering the sizes of the followers and the node importance to satisfy the negative edge constraint, we define a new objective function to select a node to be removed.

Even if {\DFBA} improves effectiveness and efficiency, it still cannot handle large-scale networks since it intrinsically computes many followers to pick the best node. Thus, we propose a fast circle-based approach ({\FCA}).

\subsection{Contributions}
The contributions of this work are summarized as follows:
\begin{itemize}
    \item \textit{Problem definition :} To the best of our knowledge, this is the first work for $k$-core in signed networks considering sufficient internal positive edges and deficient internal negative edges;
    \item \textit{Theoretical analysis :} We prove that finding an exact solution of the $(p,n)$-core problem is NP-hard and the solution is not unique; 
    \item \textit{Designing new algorithms :} We propose three algorithms to identify the high-quality solution of $(p,n)$-core; and
    \item \textit{Extensive experimental study :} By using  real-world and synthetic networks, we conduct extensive experiments to demonstrate the superiority of our proposed algorithms. 
\end{itemize}

\section{$(p,n)$-core in signed networks}\label{sec:problem}

A signed network is modelled as a graph $G = (V, E^+, E^-)$ with a set of nodes $V$, a set of positive edges $E^+$, and a set of negative edges $E^-$. We denote a positive graph (or a negative graph) $G^+$ (or $G^-$) to represent the induced subgraph consisting of positive (or negative) edges, and we denote $V^+$ (or $V^-$) to represent a set of nodes in the positive or negative graph. In this paper, we consider that the graph $G$ is unweighted and undirected. Given a set of nodes $C\subseteq V$, we denote $G[C]$ as the induced subgraph of $G$ which takes $C$ as its node set and $E[C]=\{(u,v) \in E|u,v \in C\}$ as its edge set. Note that for any pair of nodes, we allow them to have positive and negative edges simultaneously. Positive edges indicate a kind of public relationship, and  negative edges indicate a private negative relationship. Many social network services allow hiding other users' posts even if they are friends. 
To introduce our problem, we present some basic definitions used in this paper, which are explained in Table~\ref{tab:notation}.

\begin{figure}[t]
\centering
\includegraphics[width=0.99\linewidth]{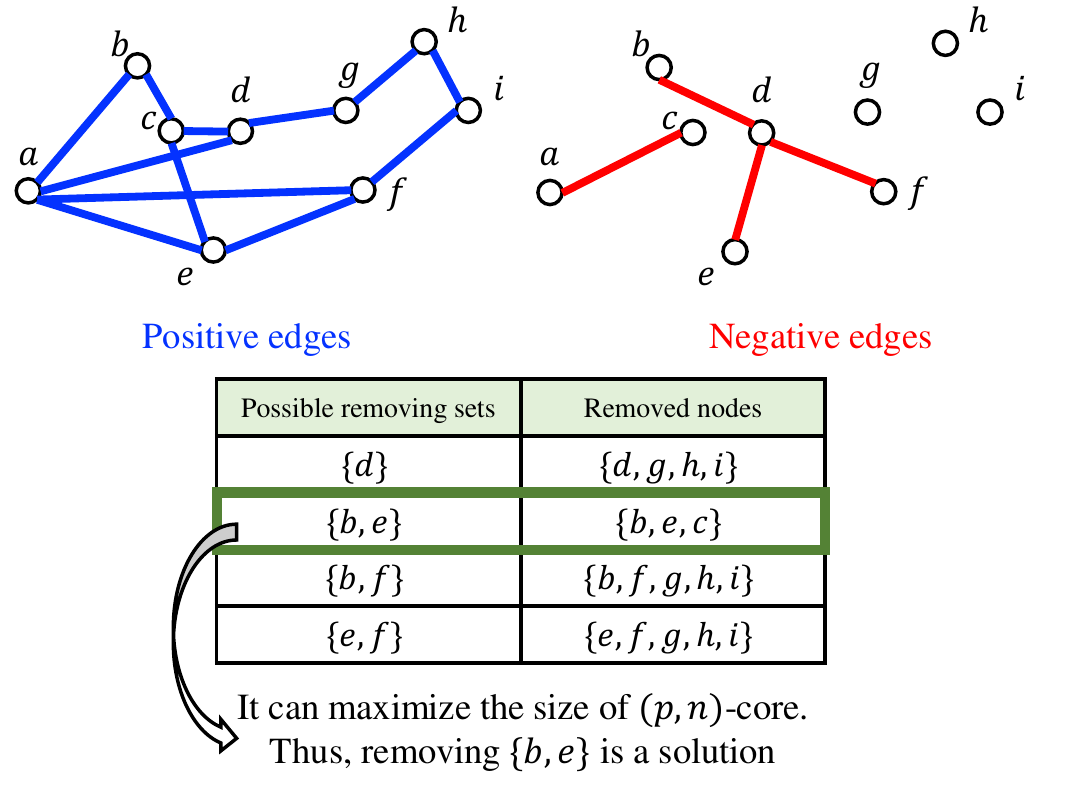}
\vspace{-0.3cm}
\caption{Example of $(p,n)$-core}
\label{fig:example}
\end{figure}

\begin{table}[t]
\caption{Notations}
\label{tab:notation}
\centering
\begin{tabular}{c|c}
\hline
\multicolumn{1}{c|}{\textbf{Notation}} & \multicolumn{1}{c}{\textbf{Description}}   \\ \hline \hline
$G=(V,E^+, E^-) $                             & A signed network                   \\ \hline
$p $                             & positive edge constraint           \\ \hline
$n$                              & negative edge constraint           \\ \hline
$H=(V_H, E_H^+, E_H^-)$                              & a subgraph $H$ of $G$           \\ \hline
$H^+, H^-$                              & a positive (or negative) graph $H$ of $G$           \\ \hline
$D(u)$                          & disgruntlement of the node $u$     \\ \hline
$F(u)$                           & followers of node $u$          \\ \hline
$\tilde{F}(u)$                           & approximated number of followers of node $u$          \\ \hline
$LB(u)$                           & lower-bound of the node $u$'s followers  \\ \hline
$UB(u)$                           & upper-bound of the node $u$'s followers  \\ \hline
\hline
\end{tabular}
\end{table}

\begin{definition}
(\underline{Signed network}). 
A signed network $G=(V,E^+, E^-)$ consists of $V$ nodes, a set of positive edges $E^+$ and a set of negative edges $E^-$. 
\end{definition}

Figure~\ref{fig:example} depicts an example of a signed network. Note that the blue-coloured edges indicate positive edges and red-coloured edges indicate negative edges. This signed network contains $9$ nodes, $12$ positive edges, and $4$ negative edges. Next, we present two constraints to define the $(p,n)$-core problem.  

\begin{definition}
(\underline{Positive edge constraint}). 
Given a signed network $G=(V,E^+, E^-)$ and positive integer $p$, a subgraph $H \subseteq V$ satisfies the positive edge constraint if any nodes in $H$ have at least $p$ positive edges in $H$, i.e., $\delta(H)\geq p$. 
\end{definition}

\begin{definition}
(\underline{Negative edge constraint}). 
Given a signed network $G=(V,E^+, E^-)$ and integer $n$, a subgraph $H \subseteq V$ satisfies the negative edge constraint if any nodes in $H$ have less than $n$ negative edges in $H$, i.e., $\gamma(H) < n$.
\end{definition}

Note that both edge constraints are utilised to define our problems.
Next, we present $k$-core which is related to the positive edge constraint. 

\begin{problemDefinition}\label{def:kcore}
(\underline{$k$-core}~\cite{seidman1983network}). 
Given a graph $G=(V,E)$ and a positive integer $k$, $k$-core, denoted as $H$, is a maximal subgraph of which each node has at least $k$ neighbours in $H$.  
\end{problemDefinition}

It is known that finding $k$-core in a network can be achieved in a polynomial time~\cite{batagelj2003m}.
In the following, we use a notation $p$-core graph of a signed network which indicates that the induced subgraph of $p$-core, i.e., $H=G[D]$ where $D$ is a $p$-core. 
Now, the $(p,n)$-core problem is ready to be formulated.

\begin{problemDefinition}
(\underline{$(p,n)$-core}). 
Given a signed network $G$, positive integer $p$, and $n$, $(p,n)$-core, denoted as $H$, is a subgraph of $G$ where every node satisfies the positive and negative edge constraints in $H$, i.e., we aim to find a subgraph $H$ such that
\begin{itemize}
    \item $|H|$ is maximised;
    \item $\delta(H) \geq p$;
    \item $\gamma(H) < n$
\end{itemize}
\end{problemDefinition}

Note that the definition of $(p,n)$-core is an extension of $k$-core by additionally considering the negative edge constraint.

\begin{example}
Figure~\ref{fig:example} shows an example of a signed network. Suppose that $p=2$ and $n=2$. 
In this case, the whole graph does not satisfy the negative edge constraint since the node `$d$' has three negative edges. There are four options to satisfy the negative edge constraint.
First, the node $d$ can be removed, then the constraint can be satisfied. However, the size of the remaining graph is $5$ since the nodes $\{d, g, h, i\}$ are removed together due to the positive edge constraint.  
Secondly, the nodes $\{b,e\}$ can be removed. Then the node $c$ is additionally removed due to the positive edge constraint; therefore, the size of the remaining graph is $6$. 
Thirdly, the nodes $\{b,f\}$ can be removed. Then  the nodes $\{b,f,g,i,h\}$ are removed cascadingly, and the size of the remaining graph is $4$.  
Lastly, the nodes $\{e,f\}$ can be removed. The nodes $\{b,f,g,i,h\}$ are removed together, and then the size of the remaining graph is $4$. 
Since  maximising the size of $(p,n)$-core is our goal, removing the nodes $\{b,e\}$ is the best choice to get a solution. 
\end{example}

\subsection{Properties of $(p,n)$-core}

\begin{property}
$(p,n)$-core is not unique. 
\end{property}

\begin{proof}
Suppose that we have two solutions of $(p,n)$-core named $D_1$ and $D_2$ while $|D_1|=|D_2|$. A subgraph $D^*=D_1\cup D_2$ is then formed. Here, two different node sets are defined. 
(1) $D_{1,2} = D_1 \setminus D_2$; and
(2) $D_{2,1} = D_2 \setminus D_1$. 
There must be a set of negative edges from $D_{1,2}$ to $D_{2,1}$; otherwise, $D^*$ can be the solution. This implies that removing $D_{1,2}$ makes the nodes $D_{2,1}$ satisfy the negative edge constraint. 
For example, suppose that $|D_{1,2}|=|D_{2,1}|=1$. Then, two nodes $a\in D_{1,2}$ and $u\in D_{2,1}$ can be obtained. In $D^*$, nodes $a$ and $u$ are connected in the negative graph, and they have exactly $n$ negative edges. Otherwise, both $D_1$ and $D_2$ cannot be the solution. This implies that if two nodes $a$ and $u$ are connected in the negative graph and they have $n$ negative edges, $(p,n)$-core can have two solutions. Therefore, the solution of $(p,n)$-core is not unique. 
\end{proof}

\begin{theorem}
Finding a solution of $(p,n)$-core is NP-hard.
\end{theorem}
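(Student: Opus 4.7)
The plan is to reduce from the Maximum Independent Set problem (MIS), which is well-known to be NP-hard. Given an MIS instance $(G', k)$ with $G' = (V', E')$ and target $k \geq 1$, I would build a signed network $G = (V, E^+, E^-)$ by adjoining $p$ fresh helper nodes $U = \{u_1, \ldots, u_p\}$: set $V = V' \cup U$, let $E^+$ consist of a clique on $U$ together with a complete bipartite graph between $U$ and $V'$, and let $E^- = E'$. Fix the negative threshold to $n = 1$. I would then prove that $G$ admits a $(p, 1)$-core of size at least $p + k$ if and only if $G'$ has an independent set of size at least $k$.

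For the forward direction, I would argue that every $(p,1)$-core $H$ with $|H| > p$ must contain all of $U$. Since every $v \in V'$ has positive neighbors only inside $U$, including any single $v$ in $H$ forces $|H \cap U| \geq p$, hence $U \subseteq H$. Once $U \subseteq H$, the negative constraint $\gamma(H) < 1$ reduces to requiring $S := H \cap V'$ to be an independent set in $G'$, because the only negative edges lie inside $V'$ and coincide with $E'$ while helpers carry no negative edges. This gives $|H| = p + |S|$ with $S$ independent in $G'$.

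For the reverse direction, I would take any independent set $S$ in $G'$ with $|S| \geq 1$ and verify that $H = U \cup S$ is a valid $(p,1)$-core. Each $v \in S$ has positive degree exactly $p$ (one edge to each helper) and negative degree $0$; each $u_i \in U$ has positive degree $(p-1) + |S| \geq p$ from the other helpers and from $S$, with negative degree $0$. Hence $|H| = p + |S|$, closing the equivalence.

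The main obstacle is ruling out degenerate alternatives, most notably $(p,1)$-cores composed purely of helpers. If $H \subseteq U$ with $|H| = t$, each $u_i \in H$ has positive degree $t - 1 \leq p - 1 < p$, so no such set satisfies the positive constraint, and the maximum $(p,1)$-core must come from the MIS branch. A brief remark is then needed to confirm that the reduction is polynomial (under the natural convention that $p \leq |V|$, since otherwise the problem is trivially infeasible), so NP-hardness of MIS transfers to the $(p,n)$-core problem.
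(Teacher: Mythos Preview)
Your reduction is correct and takes a genuinely different route from the paper. The paper reduces from \textsc{$k$-Clique}: it sets $E^+ = E$, makes $E^-$ the complete graph, and chooses $p = k-1$, $n = k+1$, so that the negative constraint caps $|H|$ while the positive constraint forces a clique. You instead reduce from \textsc{Maximum Independent Set} by fixing $n = 1$, placing the MIS instance directly in $E^-$, and using the helper clique $U$ to discharge the positive constraint. Your approach actually yields a sharper statement---NP-hardness already for $n = 1$ and every fixed $p \geq 1$---whereas in the paper's reduction both parameters depend on $k$. The paper's construction is more compact (no auxiliary vertices), but both arguments are elementary and of comparable length.

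One point of presentation: you never say explicitly how $p$ is chosen in the reduction. Since your gadget works for any fixed $p \geq 1$, simply declaring ``fix any $p \geq 1$'' (or just $p=1$) at the outset removes the ambiguity and makes the polynomial-size claim immediate; the construction then has $|V'| + O(1)$ vertices and $O(|V'|)$ extra positive edges. With that in place, the parenthetical about ``$p \leq |V|$'' becomes unnecessary.
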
 

\begin{proof}
To show the hardness, we utilise the $k$-clique problem which is a classic problem and NP-hard. 
First, suppose that we have an instance of $k$-clique : $I_{KC}=(G=(V,E), k)$. We then show a reduction from $I_{KC}$ to an instance of our problem. We first construct a signed network $S=(V, E^+, E^-)$ where $E^+ = E$, and we generate $|V|(|V|-1)$ negative edges in $E^-$, and then $n$ and $p$ are set to $k+1$ and $k-1$, respectively. 
It implies that the size of the solution must be larger than or equal to $k$, and less than $k+1$, i.e., the size is $k$ and its minimum degree is $k-1$. This indicates that finding a solution of  $I_{pncore}=(S, k-1, k+1)$ is to  find a $k$-clique in $I_{KC}$. As mentioned above, a reduction from an instance $I_{KC}$ to the instance $I_{pncore}$ can be done in polynomial time. 
Therefore, the $(p,n)$-core problem is NP-hard for finding an exact solution. 
Figure~\ref{fig:np} shows a reduction from the $k$-clique problem to our $(p,n)$-core problem.  

\begin{figure}[t]
\centering
\includegraphics[width=0.99\linewidth]{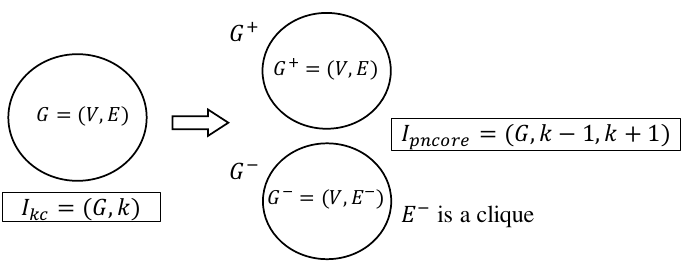}
\vspace{-0.2cm}
\caption{Reduction procedure}
\label{fig:np}
\end{figure}
\end{proof}

\begin{property}
If any solution of $(p,n)$-core where $p\geq 2$ and $n\geq 2$ contains a node $u$, any solution of ($p-1,n$)-core must contain the node $u$ and any solution of ($p,n+1$)-core must contain the node $u$. 
\end{property}

\begin{proof}
This property implies that a solution of $(p,n)$-core can be a solution of  ($p-1,n$)-core or ($p,n+1$)-core. A node of a solution in $(p,n)$-core guarantees that the node has at least $p$ positive edges. This directly implies that it has at least $p-1$ positive edges. Similarly, a node of a solution in $(p,n)$-core indicates that the node has less than $n$ negative edge. Thus, the node belongs to ($p,n+1$)-core. 
\end{proof}

In this paper, we consider a signed network as the two-layer graph. Figure~\ref{fig:example} shows a two-layer graph consisting of a positive graph layer and a negative graph layer for further clarity.

\section{Algorithms for $(p,n)$-core}\label{sec:algorithms}

In this section, we present three algorithms with different strategies for finding a solution of $(p,n)$-core. 
First, we propose a follower-based algorithm named {\FBA} which puts a positive edge constraint as the primary concern. Next, we propose {\DFBA} which puts both positive and negative edge constraints as a major concern simultaneously. However, both approaches must compute a set of nodes which is removed together (named \textit{followers}) for every iteration to find the best nodes to be removed. Hence, we propose a fast-circle algorithm, named {\FCA}, by approximately estimating the size of the followers to select the best node to be removed. 
In this section, to avoid confusion, we interpret a signed network as a two-layer graph (the positive graph layer and negative graph layer can be checked in Figure~\ref{fig:example}).

\subsection{Follower-Based Algorithm (\FBA)}\label{sec:FBA}
We first present an algorithm named {\FBA} which focuses on the positive edge constraint as the major concern. As our goal is to identify a set of nodes satisfying the constraints, it is reasonable to remove a set of nodes to satisfy the constraints, as discussed in Section~\ref{sec:problem}. 
When a node is removed, a set of nodes can be removed together in a cascading manner due to the positive edge constraint (minimum degree constraint). This is because when a node is removed, the degree of the neighbour nodes decreases, which can make the nodes to not satisfy the positive edge constraint. Thus, we firstly define a concept named \textit{follower}. 

\begin{definition} \label{def:follower}
(\underline{Followers}). Given an integer $p$, positive graph $H$ with $\delta(H)\geq p$, and node $v\in H$, followers of node $v$, denoted as, $F(v)$, consist of (1) node $v$ and (2) the nodes that are deleted cascadingly owing to the positive edge constraint with $p$ when node $v$ is removed.
\end{definition}

An example and properties are described as follows. 

\begin{example}
In Figure~\ref{fig:example}, suppose that $p=2$. The followers of some nodes are as follows : 
$F(a)=\{a,b\}$,
$F(b)=\{b\}$,
$F(c)=\{b,c\}$,
$F(d)=\{d,g,h,i\}$,
$\cdots$, and 
$F(i)=\{g,h,i\}$. 
Note that followers of a specific node $v$ contain the node $v$ based on  Definition~\ref{def:follower}. 
\end{example}

\begin{property}\label{property:pfa_lemma}
Given a $p$-core graph $H$, when we remove a node $u\in H^+$ with $|F(u)|=1$, the only node $u$ in the negative layer graph is removed, i.e., no other nodes are additionally removed. 
\end{property}

Property~\ref{property:pfa_lemma} is reasonable since the negative edge constraint does not require any cascading removal. 

\begin{property}\label{property:nfa_lemma}
When we remove a node $u$ in the negative layer graph, a set of nodes which contains the node $u$ can be removed together due to the positive edge constraint. 
\end{property}

Due to Property~\ref{property:pfa_lemma} and Property~\ref{property:nfa_lemma}, removing multiple nodes is determined in the positive graph. Thus, in {\FBA}, it is required to compute the followers of the candidate nodes to be removed. The strategy is as follows. 

\begin{strategy}
A set of nodes is identified in the negative graph of which each has at least $n$ negative neighbour nodes. We call these nodes key nodes. 
Next, a set of candidate nodes to be removed is found. The candidate nodes are the union of the key nodes and neighbours of the key nodes.
All the followers are computed for every candidate node, and then a node which has the minimum size of the followers is found. 
The node and its followers are then removed. 
This process is repeated until the remaining graph satisfies the positive and negative constraints simultaneously.
\end{strategy}

Note that when computing the followers of the candidate nodes, a set of nodes to be removed can be found by simply checking the size of the followers. 
Algorithm~\ref{alg:FBA} presents the {\FBA} procedure.
Initially, $p$-core and a set of candidate nodes are computed (lines 1-3). 
Next, all the followers of the candidate nodes are identified then the node having the smallest followers is selected and then removed (lines 5-9). This process is repeated until the negative edge constraint of the remaining graph is satisfied (lines 4-9). Finally, a set of nodes in the current subgraph is returned as a result (line 10).

\begin{algorithm}[t]
\SetKw{return}{return}
\SetKwData{D}{disgruntlement}
\SetKwData{T}{T}
\SetKwFunction{pcore}{$p$-core}
\SetKwFunction{pGraph}{$p$Graph}
\SetKwFunction{nGraph}{$n$Graph}
\SetKwFunction{computeFollower}{computeFollower}
\SetKwFunction{connectedComponent}{neighbours}
\SetKwFunction{removeNode}{removeNode}
\SetKwInOut{Input}{input}
\SetKwInOut{Output}{output}
\Input{Signed graph $G$, and positive integers $p$ and $n$}
\Output{$(p,n)$-core}
$H\leftarrow$ $G[$\pcore{$G$}$]$\;
\While{$\gamma(H^-) < n$}{
 $S \leftarrow \bigcup_{v\in H^-} v $ if $N(v, H^-) \geq n$\;
 $\T \leftarrow \bigcup_{s\in S} N(s, H^-)$\;
 $F(v) \leftarrow$ \computeFollower{$v, H^+$}, $\forall v\in \T$\;
 $u \leftarrow $ $\argmin_{v\in \T} |F(v)|$\;
 $H^+ \leftarrow$ \removeNode{$H^+, F(u)$}\; 
 $H^- \leftarrow$ \removeNode{$H^-, F(u)$}\; 
 $T \leftarrow T \setminus F(u)$\;
}
\return $V_{H}$\;
\caption{Follower-Based Algorithm (\FBA)}
\label{alg:FBA}
\end{algorithm}

\spara{Complexity.} The time complexity of {\FBA} is $O(|V|^2 (|V|+|E^+|))$ where $|V|(|V|+|E^+|)$ is to compute the followers of all the nodes in each iteration and the maximum number of iterations is $|V|$. 

\spara{Limitation of {\FBA}.} Even if {\FBA} is a straightforward approach, it does not consider the negative edges to select the best node to be removed. It only utilises the negative edges to select the candidate nodes. Therefore, it may remove many undesired nodes to satisfy the constraint, that is, it suffers from the effectiveness issue. In addition, for every iteration, it must compute all the followers in the remaining graph; thus, it also suffers from the efficiency issue.

\subsection{Disgruntled Follower-Based Algorithm (DFBA)}\label{sec:DFBA}

The previous sections discuss the {\FBA} and its limitations since it only focuses on the followers of the identified candidates to find proper nodes to be removed and computing the exact followers is required in every iteration. 

To incorporate the negative edge constraint, we introduce a definition named \textit{disgruntlement} which is an indicator that describes how helpful it is to satisfy the negative edge constraint when we remove a node. Note that a node having a large disgruntlement is preferred to be removed since it helps satisfy the negative edge constraint. 
Therefore, in this section, we simultaneously consider both followers and disgruntlement to maximise the size of the resultant $(p,n)$-core. Since a large disgruntlement and a small number of followers is preferred, we aim at maximising the following function. 

\begin{align}\label{eq:main_obj} 
O(.) = \frac{D(.)}{|F(.)|}
\end{align}

In this section, we design an algorithm by iteratively removing a node which has the maximum $O(.)$. However, there is a major concern when computing the Equation~\ref{eq:main_obj}: \textit{how can $|F(.)|$ be computed efficiently?} Given a node $v$, computing $F(v)$ requires $O(|V|+|E^+|)$. For all the nodes,  computing the value $F(.)$ is required, and should be for every iteration. Note that the maximum iteration is $|V|$. To handle this issue, we develop a lower-bound(Section~\ref{sec:lowerbound}) and an upper-bound(Section~\ref{sec:upperbound}), and present an algorithmic procedure (Section~\ref{sec:DFBA_algorithmic}). 

\subsubsection{Considering negative edges}
Here, a disgruntlement is formally defined and an example of computing a disgruntlement is shown. Given a negative graph $G^-=(V^-, E^-)$ and a negative edge constraint $n$, a node $u\in V^-$ is denoted as a key node if its degree is larger than or equal to $n$. 

\begin{definition}
(\underline{Disgruntlement}). 
Given a negative graph $G^-$, a node $u \in V^-$, and negative edge threshold $n$, the disgruntlement of the node $u$ is defined as 
\begin{align}\vspace{-0.3cm}
\begin{aligned}
 D(u) &= D^{self}(u)+D^{neib}(u) \\
 D^{self}(u) &= \begin{dcases*}
 0, & if $u$ is not a key node,\\
 |N(u, G^-)| - n+1, & if $u$ is a key node
 \end{dcases*} \\
 D^{neib}(u) &= \sum_{w\in N(u, G^-)} 1, \forall N(w, G^-)\geq n
\end{aligned}
\end{align}\vspace{-0.3cm}
\end{definition}

\begin{example}
In Figure~\ref{fig:example}, suppose that $n=2$. The disgruntlement of the following nodes can be computed.
For node $a$, node $a$ is not a key node ($D^{self}(a)=0$) and its neighbour has no key node ($D^{neib}(a)=0$); thus, $D(a)=0$. 
Similarly, $D(c)=0$. 
Node $b$ is known not to be a key node ($D^{self}(b)=0$), but its neighbour node $d$ is a key node; thus, $D^{neib}(b)=1$. 
Since node $d$ is a key node, we notice that $D^{self}(d)=3-2+1=2$ and $D^{neib}=0$; thus, $D(d)=2$. 
\end{example}

\subsubsection{Computing a lower-bound}\label{sec:lowerbound}

In this section, we present an approach to find a lower-bound of the followers. Let us recall the definition of the followers (Definition~\ref{def:follower}). When a node in the positive graph is removed, a union of the node and a set of nodes that is removed together due to the positive edge constraint are named as the followers. 
In this section, given a node $v\in V^+$, we discuss how to find a lower-bound, i.e., a lower-bound of the node $v$ named $LB(v)$ is always smaller than $|F(v)|$.
The VD-node (verge of death) is first defined to identify the lower-bound of the followers.
\begin{definition}
(\underline{VD-node}).
Given a graph $G$, and integer $p$, a node $v\in V$ is named a  VD-node if its degree and coreness\footnote{Given a graph $G$ and a node $v$, the coreness of the node $v$ is $q$ if the node $v$ belongs to the $q$-core but not to $(q+1)$-core.} is $p$. 
\end{definition}
The VD-node implies that the node will be deleted after any neighbour nodes of the VD-node are removed since the degree of the VD-nodes is exactly $p$.

\begin{definition}
(\underline{VD-cc}).
Given a graph $G$, and a set of VD-nodes, a set of connected components of VD-nodes are denoted as VD-ccs. 
\end{definition}

If any node in a VD-cc is removed, all the nodes in the connected components are removed together. Thus, finding the lower-bound of the followers can be achieved as follows.
First, all the VD-nodes ($O(|E^+|)$) are computed and VD-ccs ($O(|V|+|E^+|)$) are found. 
Next, given a node $v\in V$,  $N(v, G^+)$ can be computed. 
If any neighbour node $w\in N(v, G)$ is in VD-nodes, the VD-cc of the node $w$ can be obtained. After combining all the VD-ccs of the neighbour nodes, the minimum size of the followers of the node $v$ can be computed by summarising all the sizes of VD-ccs. 
A benefit of using VD-ccs is that to find the lower-bounds of all the nodes, repeatedly finding VD-ccs is not required. 

\begin{figure}[ht]
\centering
\includegraphics[width=0.99\linewidth]{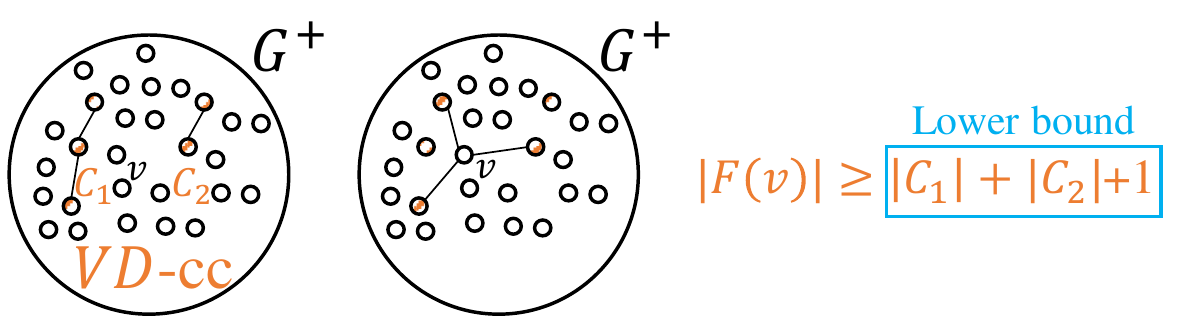}
\vspace{-0.1cm}
\caption{Computing a lower-bound}
\label{fig:lowerbound}
\end{figure}

\subsubsection{Computing an upper-bound}\label{sec:upperbound}

In this section, we present an approach to find an upper-bound of the number of followers by incorporating the hierarchical structure of the $k$-core~\cite{seidman1983network}. To describe the idea of identifying the upper-bound, we must first introduce some definitions. 

\begin{definition}
(\underline{CCNode}).
Given a positive graph $G^+$, and integer $k$, we define a set of connected subgraphs of induced subgraphs by the $k$-core as a set of CCNodes, i.e., each connected subgraph of $k$-core is a CCNode. 
\end{definition}

We can consider that any pair of two CCNodes is not connected and the number of maximal CCNodes is $\frac{|V^+|}{(k+1)}$. By utilising the CCNode, we define CCTree. 

\begin{definition}
(\underline{CCTree}).
CCTree is a tree consisting of a dummy root node and a set of CCNodes. CCNodes in the same tree level imply that they belong to the same $x$-core, and a parent and a children pair of the CCNodes which are connected in the CCTree implies that the parent CCNode (in $x'$-core) contains the children CCNode (in $(x'+1)$-core). 
\end{definition}

The height of the CCTree is the \underline{$c^{max}$ - $p+1$} where $c^{max}$ is the maximum coreness in the positive graph. Identifying the upper-bound is described as follows. 

\begin{strategy}
A CCTree is first constructed based on the input graph $G^+$ and positive edge threshold $p$ by making a dummy node and adding it to the CCTree as a root node. Next, $p$-core is computed and all the connected components induced by $p$-core are found. The connected subgraphs can be the CCNodes and they can be the children of the root node with level 1 of the CCTree. 
Then, for each CCNode $C_i$, we check whether $C_i$ contains $p'$-core with $p'=p+l$ where $l$ is the level of the $C_i$. For example, if a CCNode is in level 2, it indicates that the CCNode is a subgraph induced by $(p+1)$-core. Thus, we check whether $(p+2)$-core exists in the nodes of the CCNode. 
If $p'$-core exists, a new CCNode $C_j$ is added by computing the connected components of $p'$-core. Then, $C_i$ is made to be the parent of $C_j$ if $C_j \subseteq C_i$. This process is repeated.

After constructing a CCTree, the upper-bound given a node $v$ is ready to be calculated. First, a starting CCNode of the node $v$ is found. To identify a starting CCNode, we find a CCNode $cur$ which contains the node $v$ while the level of the $cur$ in the CCTree is the largest among the CCNodes which contain the node $v$. The \textit{children immutable size} is then computed. As the name implies, the size of the node sets is found which are not deleted after removing the node $v$ since they are already sufficiently connected (children immutable). 

\spara{Children immutable size $CI(.)$.} The children immutable size is the sum of the children's size. Since the children of $cur$ do not contain the node $v$, and their coreness value is larger than $cur$, i.e., the nodes in the children of the $cur$ are already sufficiently connected, they will not be removed after removing the node $v$. 

\begin{figure*}[t]
\centering
\includegraphics[width=0.99\linewidth]{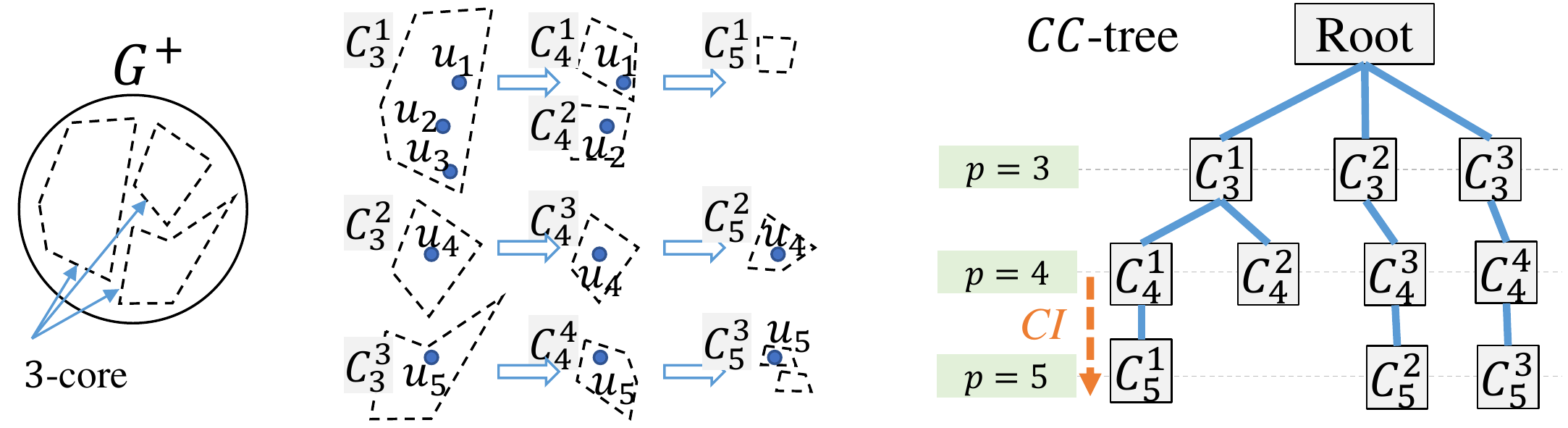}
\vspace{-0.1cm}
\caption{Computing an upper-bound}
\label{fig:upperbound}
\end{figure*}
By negating $CI(.)$ from $|cur|$, we can find upper bound. 
\begin{align}
 UB(u) = |cur| - CI(u)
\end{align}
\end{strategy}

\begin{example}
Figure~\ref{fig:upperbound} shows a CCTree and CCNodes when $p=3$. We assume that the input graph is $G^+$. Let find an upper-bound of the followers of the node $u_1$. 
We notice that the starting CCNode is $C_4^1$ since $C_4^1$ is deeply located than $C_3^1$. We then compute $CI(u_1)$ by summarising the children's size. Hence, $CI(u_1)=$ $|C_5^1|$. 
Therefore, the upper-bound of the node $u_1$ is $|C_4^1|-|C_5^1|$. 
\end{example}

\begin{algorithm}[t]
\SetKw{ret}{return}
\SetKwData{D}{disgruntlement}
\SetKwData{T}{T}
\SetKwFunction{computedisgruntlement}{computeD}
\SetKwFunction{computingVDcc}{computingVDcc}
\SetKwFunction{computecoreness}{computeCoreness}
\SetKwFunction{remove}{remove}
\SetKwFunction{computeCandidates}{computeCandidates}
\SetKwFunction{constructCoreTree}{constructCoreTree}
\SetKwFunction{neighbours}{neighbours}
\SetKwFunction{updatecoreness}{updateCoreness}
\SetKwFunction{updateCT}{updateCT}
\SetKwFunction{computeLowerBound}{computeLowerBound}
\SetKwFunction{computeUpperBound}{computeUpperBound}
\SetKwInOut{Input}{input}
\SetKwInOut{Output}{output}
\Input{Signed graph $G$, and positive integers $p$ and $n$}
\Output{$(p,n)$-core}
\tcp{same with lines 1-3 in Algorithm~\ref{alg:FBA}}
$c[v] \leftarrow $\computecoreness{$H^+$}, $\forall v\in V_H$\;
$ct \leftarrow$ \constructCoreTree{}\;
\While{$\gamma(H^-) < n$}{
$D(v) \leftarrow $ \computedisgruntlement{$v, H^-$}, $\forall v\in T$ and $D(v)\neq 0$\;
 $M \leftarrow $ \computingVDcc{$H^+, c[]$}\;
 $LB^* \leftarrow $ \computeLowerBound{$v, ct$} \;
 $UB^* \leftarrow $ \computeUpperBound{$v, M$}\;
 $Cand \leftarrow$ \computeCandidates{$LB^*, UB^*$}\;
 $u \leftarrow $ $\argmax_{v\in Cand} O(v) $\;
 $F(u) \leftarrow$ \computeFollower{$u, H^+$}\;
 $c.$\remove{$c$, $F(u)$}\;
 $H^+ \leftarrow$ \removeNode{$H^+, F(u)$}\; 
 $H^- \leftarrow$ \removeNode{$H^-, F(u)$}\; 
 $T \leftarrow T \setminus F(u)$\;
 \updatecoreness{$c$} \tcc{apply Purecore~\cite{sariyuce2013streaming}}
 \updateCT{$ct$, $F(u)$}\;
}
\ret $V_H$\;
\caption{Disgruntled follower-based algorithm(\DFBA)}
\label{alg:DFBA}
\end{algorithm}

\subsubsection{Algorithmic procedure}\label{sec:DFBA_algorithmic}
We designed the {\DFBA} by combining the lower-bound and upper-bound, and disgruntlement. Here, we first discuss our strategy and the objective function of {\DFBA}.

\begin{strategy}
Instead of computing all the nodes' followers, we aim to compute only a few followers for efficiency. We first compute the following three values : 
(1) Disgruntlement score $D(.)$; 
(2) lower-bound $LB(.)$; and 
(3) upper-bound $UB(.)$. 
By utilising disgruntlement and the $LB(.)$ and $UB(.)$, we can set the new lower-bound and upper-bound. Note that the following equation always holds. 
\begin{align}\label{eq:inequality}
\vspace{-0.2cm}
\begin{aligned}
 LB(.) \leq |F(.)| \leq &UB(.) 
 \Rightarrow\frac{D(.)}{LB(.)} \geq \frac{D(.)}{|F(.)|} \geq \frac{D(.)}{UB(.)} \\
 \Rightarrow & UB^*(.) \geq O(.) \geq LB^*(.) 
\end{aligned}
\vspace{-0.2cm}
\end{align}
In our algorithm, we utilise Equation~\ref{eq:inequality} in the pruning strategy. Given two nodes $u$ and $v$, if $LB^*(v) > UB^*(u)$, computing the followers of the node $u$ is not necessary. This is because the upper-bound is smaller than the lower-bound of the node $v$. Hence, we first compute the node having the largest $LB^*(.)$ and then find a set of candidate nodes to compute the followers. The detailed procedure of {\DFBA} is described in Algorithm~\ref{alg:DFBA}. 
\end{strategy}

\spara{Complexity.} 
The time complexity of {\DFBA} is $O(|V|^2(|V|+|E^+|)+|V||E^-|)$ since in the worst case,  all the nodes' followers in $|V|$ iterations must be computed. Thus, it takes $O(|V|^2(|V|+|E^+|)$ and computing disgruntlement for every iteration takes $O(|V||E^-|)$. 

\subsection{Fast-Circle algorithm (\FCA)}\label{sec:FCA}

\subsubsection{Motivation}

In Sections~\ref{sec:FBA} and \ref{sec:DFBA}, we compute the exact followers to find the best node to be removed. 
In this section, we propose a new approach by not computing the exact number of followers. To estimate the size of the followers, the characteristics of the followers in a signed network are verified. A simple observation is the relationship between the degree of the nodes and the size of followers, which is presented in Figure~\ref{fig:1hop}. When the regression line (red-coloured) to fit the data distribution is plotted, we observe that the degree and the size of followers are correlated. 
Thus, selecting a node having a large degree may have a large number of followers. Figure~\ref{fig:2hop} shows the correlation between the two hop degree of nodes and the size of followers, which is observed to still have a correlation. This observation indicates that when a node having a small degree is selected, it may have few followers. The challenge of this approach is in \textit{efficiently compute the $r$-hop degree} To handle the challenge, we utilise the HyperANF technique.

\begin{figure}[ht]
\centering
\includegraphics[width=0.99\linewidth]{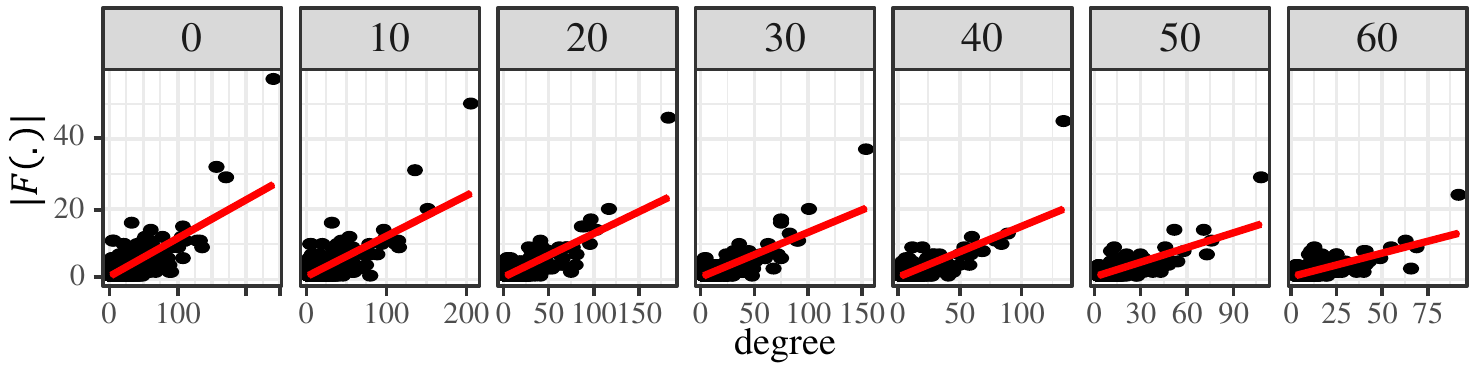}
\vspace{-0.3cm}
\caption{Node degree and the size of followers}
\label{fig:1hop}
\end{figure}

\begin{figure}[ht]
\centering
\includegraphics[width=0.99\linewidth]{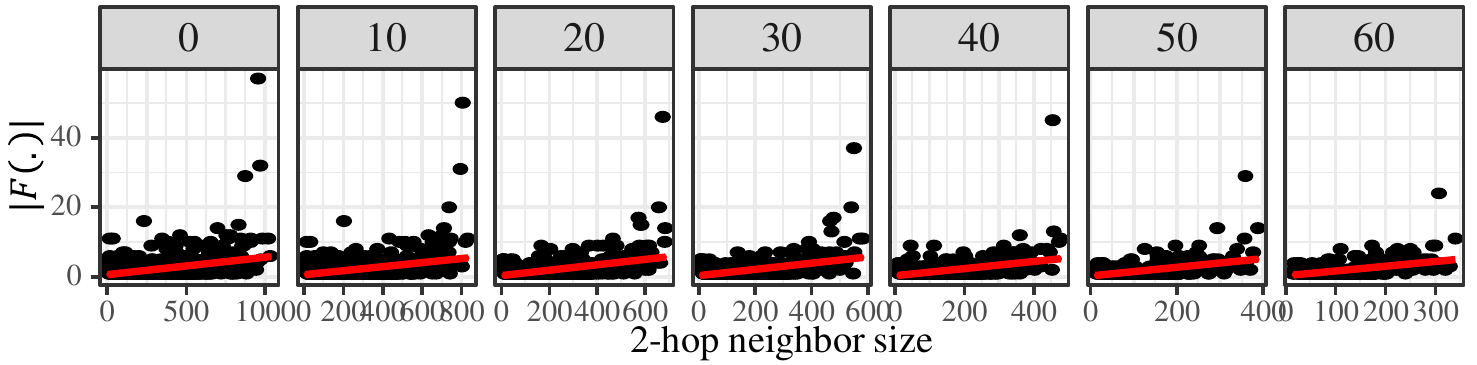}
\vspace{-0.3cm}
\caption{Two-hop degree and the size of followers}
\label{fig:2hop}
\end{figure}

\subsubsection{HyperANF}
HyperANF~\cite{boldi2011hyperanf} is a state-of-the-art algorithm for computing the approximated neighbourhood cardinality of a network. 
It utilises the HyperLogLog counter~\cite{flajolet2007hyperloglog} which is a statistical counter requiring $O(\log \log {n})$ bits. 
Therefore, HyperANF can compute the approximated number of reachable nodes within a specific distance from a node.
Unfortunately, the time complexity of the HyperANF is not revealed; however, its running time is expected to be approximately $O(r(|V|+|E|))$, where $r$ denotes a maximum distance since HyperANF is a kind of extension of ANF~\cite{palmer2002anf} requiring $O((|V|+|E|)h)$. 

\subsubsection{Algorithmic procedure}

\begin{strategy}
The high-level idea of {\FCA} is to use the similar objective function in Equation~\ref{eq:main_obj} to select the best node to be removed by not computing the exact value $F(.)$ to find the best node,  the estimated number of followers $\tilde{F}(.)$ is computed. 
To estimate the number of followers, we utilise HyperANF~\cite{boldi2011hyperanf}. 
After selecting the best node to be removed, the node is selected and the exact followers is determined. 
Next, $\tilde{F}(.)$ values are updated by computing the shortest paths within a specific distance $r$. 
\end{strategy}

A pseudo description of {\FCA} can be checked in Algorithm~\ref{alg:FCA}

\begin{algorithm}[t]
\SetKw{ret}{return}
\SetKwData{T}{T}
\SetKwFunction{computeUpperBound}{computeUpperBound}
\SetKwFunction{pcore}{pcore}
\SetKwFunction{follower}{follower}
\SetKwFunction{getBest}{getBest}
\SetKwInOut{Input}{input}
\SetKwInOut{Output}{output}
\Input{Signed graph $G$, and positive integers $p$, $n$, and $r$}
\Output{$(p,n)$-core}
$H\leftarrow$ $G[$\pcore{$G$}$]$\;
$\forall v\in H$, compute $D(v)$\;
$\forall v\in H$, compute $\tilde{F}(v)$\;
$\forall v\in H$, compute $O^*(v)=\frac{D(v)}{|\tilde{F}(v)|}$\;
\While{$\gamma(H^-) < n$}{
 $u \leftarrow$ \getBest($O^*$)\;
 $F(u) \leftarrow $ \follower{$u$, $H^+$}\;
 update $O^*$ and $\tilde{F}(.)$\;
 $H^+ \leftarrow$ \removeNode{$H^+, F(u)$}\; 
 $H^- \leftarrow$ \removeNode{$H^-, F(u)$}\; 
}
\ret $V_H$\;
\caption{Fast Circle Algorithm (\FCA)}
\label{alg:FCA}
\end{algorithm} 

\spara{Time complexity.} Let denote $H$ as a time complexity for HyperANF. Then, the time complexity of our {\FCA} is as follows. 
\begin{itemize}
 \item Maximum number of iterations is $O(|V|)$.
 \item Computing HyperANF takes $O(H)$.
 \item Computing followers takes $O(|V|+|E|)$.
 \item Dijkstra's shortest path algorithm takes $O(|V|+|E|\log{|V|})$. 
\end{itemize}

Hence, the time complexity of {\FCA} algorithm is $O(|V|(|V|+|E|\log{|V|}+|V|+|E|) + H)$.

\section{Experiments}\label{sec:experiments}
In this section, we evaluate our algorithms using eight real-world networks and three synthetic networks. All the experiments were conducted on Ubuntu 18.04 with 128GB memory and a 2.50GHz Xeon CPU E5-4627 v4. For implementation, JgraphT library~\cite{jgrapht} was used. 

\spara{Dataset.}
Table~\ref{tab:real_dataset} reports the statistics of the eight real-world networks. In our work, we do not consider the weighted temporal networks; thus, we convert Alpha and OTC datasets~\cite{kumar2016edge} as signed networks by ignoring the weights, and keeping the most recent edges with its sign. 
All the datasets are publicly available. 
In Table~\ref{tab:real_dataset},  $|V|$ is the number of nodes, $|+|$ (or $|-|$) is the number of positive (or  negative) edges, MC is the maximum positive coreness value of a network, and \# $\Delta$ is the number of triangles in the positive graph. 

\begin{table}[ht]
\caption{Real-world datasets}
\label{tab:real_dataset}
\begin{tabular}{c|c|c|c|c|c}
\hline
   Dataset                & $|V|$       & $|+|$       & $|-|$       & MC & \# $\Delta$               \\ \hline \hline
 Alpha~\cite{kumar2016edge}                  & 3,783     & 12,759    & 1,365     & 18            & 16,838                         \\  
                                     OTC~\cite{kumar2016edge}                    & 5,881     & 18,250    & 3,242     & 19             &  23,019                         \\ 
                                     Epinions(EP)~\cite{leskovec2010signed}           & 131,828	      & 590,466 &   120,744   & 120           & 3,960,165                      \\ 
                                     SD0211~\cite{leskovec2010signed} & 82,140    & 382,167   & 118,314   & 54            & 418,832                        \\ 
                                     SD0216~\cite{leskovec2010signed} & 81,867    & 380,078   & 117,594   & 54            & 414,903                        \\ 
                                     SD1106~\cite{leskovec2010signed} & 77,350    & 354,073   & 114,481   & 53            & 395,289                        \\ 
                                     Wiki-E(WE)~\cite{brandes2009network}           & 116,836   & 774,785   & 1,253,086 & 88            & 4,450,048                      \\ 
                                     Wiki-I(WI)~\cite{maniu2011casting}   & 138,587   & 629,523   & 86,360    & 53            & 2,599,698                      \\ \hline
 \hline
\end{tabular}
\end{table}

Since real-world networks are relatively small-sized signed networks, we synthetically generate signed networks by utilising existing non-signed networks~\cite{yang2015defining}. The statistics can be checked in Table~\ref{tab:synthetic}. 

\begin{table}[ht]
\caption{Synthetic datasets}
\label{tab:synthetic}
\centering
\begin{tabular}{c|c|c|c|c|c}
\hline
   Dataset                & $|V|$       & $|+|$       & $|-|$       & MC & \# $\Delta$               \\ \hline \hline
   
Amazon~\cite{yang2015defining} & 334,863  & 925,872    & 1,851,744     & 6    &  667,108                         \\ 
DBLP~\cite{yang2015defining} & 317,080  & 1,049,866    & 2,099,732     & 113    &  2,224,650                         \\ 

Youtube~\cite{yang2015defining} & 1,134,890  & 2,987,624    & 5,975,248     & 51    &  3,056,379                         \\ 
 \hline \hline
\end{tabular}
\end{table}

\spara{\bf Algorithms.} To the best of our knowledge, our $(p,n)$-core does not have any direct competitor in  previous literature since all the previous work focuses on the different objective function with different problems. Hence, directly comparing the size of the resultant subgraph is unfair. 
Thus, we report the result of our proposed algorithms in these experiments. Since both {\FBA} and {\DFBA} are not scalable due to their time complexity, we only report the results of both algorithms if the algorithms were terminated within 24 hours. {\FBA} is only applicable in OTC and Alpha datasets, and {\DFBA} cannot be applicable in WE.

\begin{figure}[ht]
\centering
\includegraphics[width=0.99\linewidth]{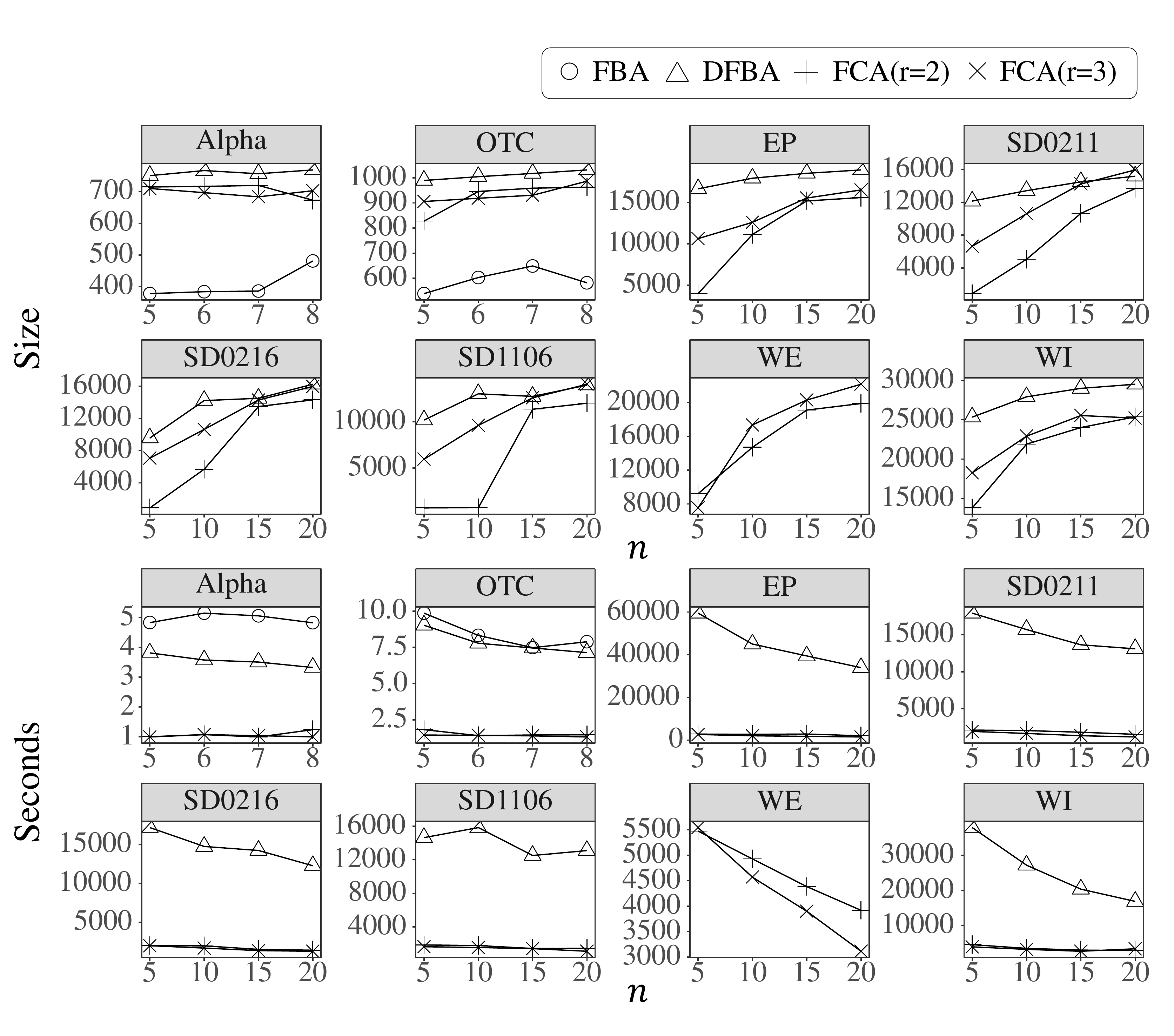}
\vspace{-0.2cm}
        \caption{Varying the variable $n$}
        \label{fig:var_n} 
\end{figure}

\spara{Effect on $n$.} 
In Figure~\ref{fig:var_n}, we fix the parameter $p=5$ then change the negative edge threshold $n$ from $5$ to $20$ (for Alpha and OTC datasets, $n$ is varied from $5$ to $8$ since the size of both networks is small). 
When the negative edge threshold $n$ becomes large, our algorithms return larger $(p,n)$-core since the large $n$ indicates that we allow more negative edges in $(p,n)$-core. In addition, we observe that when $n$ becomes large, the algorithm will be finished early since we do not need to remove many nodes in the removing procedure.  

For all the cases, we observed that {\DFBA} returns the best result. We verified that {\FBA} returns small-sized $(p,n)$-core since it does not consider the negative edges when it removes a node. We also verified that both {\FCA}s return reasonable solutions, but their effectiveness is limited compared with {\DFBA} since it uses the approximated size of the followers. We observe that {\FCA}(r=3) normally returns better results compared with {\FCA}(r=2) since it considers much more structural information.

\begin{figure}[ht]
\centering
\includegraphics[width=0.99\linewidth]{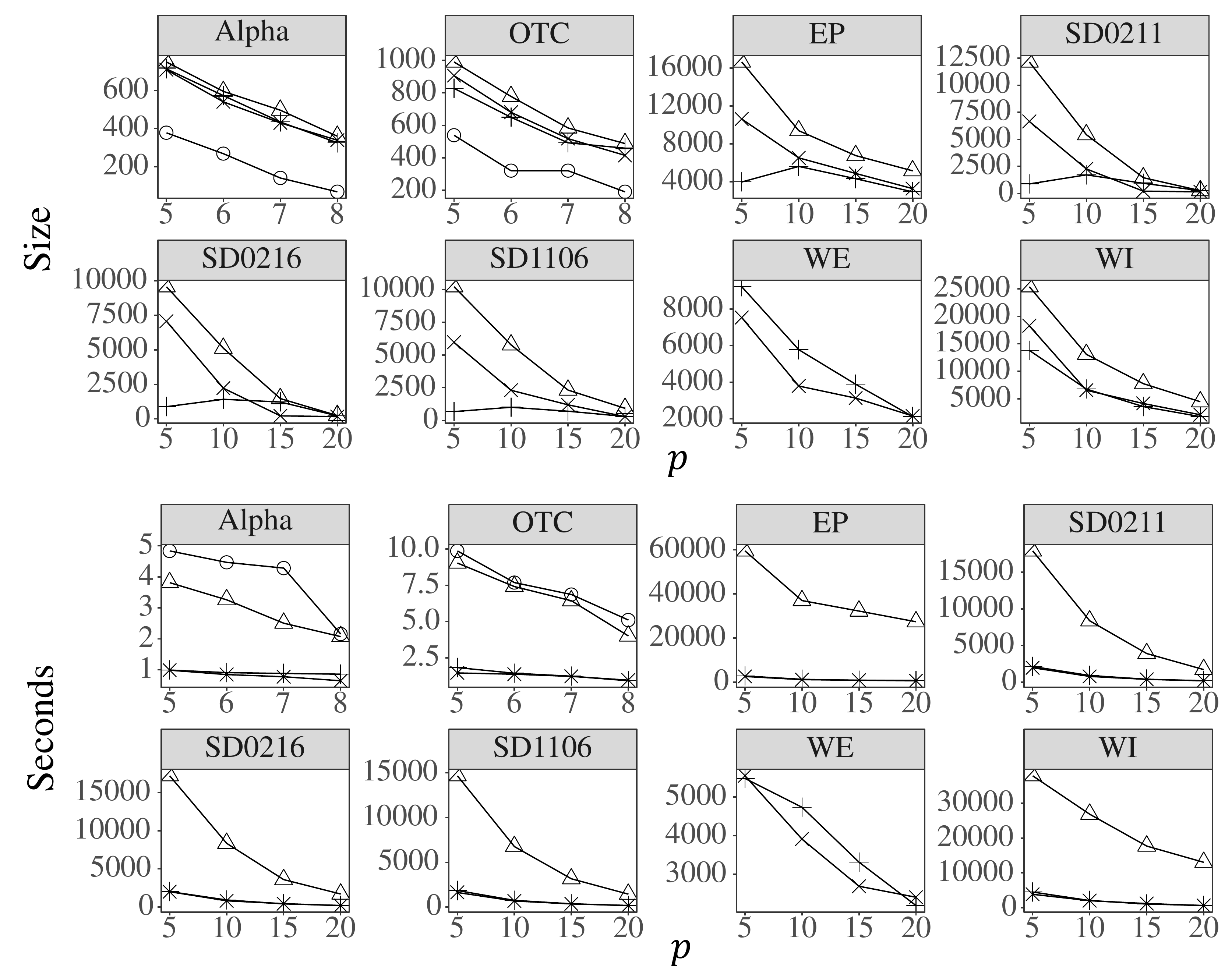}
\vspace{-0.2cm}
        \caption{Varying the variable $p$ (same legend with Figure~\ref{fig:var_n})}
        \label{fig:var_p} 
\end{figure}

\spara{Effect on $p$.} In Figure~\ref{fig:var_p}, the parameter $n$ is fixed as $5$, and then the positive edge threshold $p$ is varied from $5$ to $20$ to observe the trends (we change $n$ from $5$ to $8$ for Alpha and OTC datasets). We identified that when the $p$ value becomes large, the proposed three algorithms consistently return small-sized $(p,n)$-core as a result since the larger $p$ implies that the resultant subgraphs are more cohesive through the positive edges. We also observed that when $p$ becomes large, the algorithms are terminated early since $p$-core returns many small-sized results, i.e., the nodes to be removed are comparably limited. 

These results followed the same trends as Figure~\ref{fig:var_n}. We noticed that {\DFBA} returns the best result and {\FCA}(r=3) returns comparable results, and the {\FCA} algorithms with different radius parameters are much faster than other algorithms since it does not compute the exact number of followers for every iteration.

\begin{figure}[ht]
\centering
\includegraphics[width=0.99\linewidth]{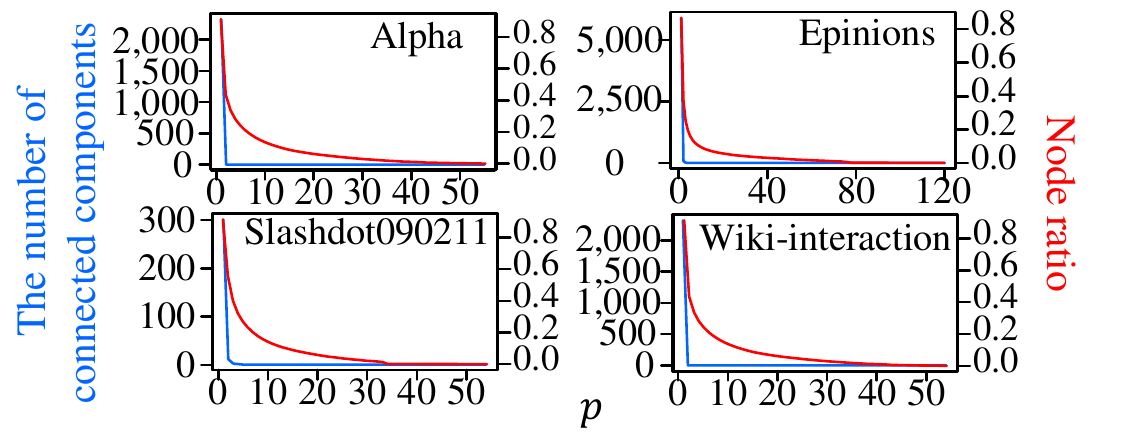}
\vspace{-0.2cm}
\caption{\# of connected components and the node ratio}
\label{fig:cc}
\end{figure}

\spara{Analysis of the pruning strategy.}
Figure~\ref{fig:cc} shows the number of connected components (blue-coloured) and the ratio of the nodes (red-coloured) when  the parameter $p$ is changed from $1$ to the maximum coreness value in real-world networks. We observed that there is a single large connected component containing all the nodes when $p$ is larger than or equal to $3$. We also see that the difference between any adjacent $p$ values is0 not significant. That implies that the children immutable size (CI) may be helpful to find the proper upper-bound to improve  efficiency.  

\begin{figure}[ht]
\centering
\includegraphics[width=0.95\linewidth]{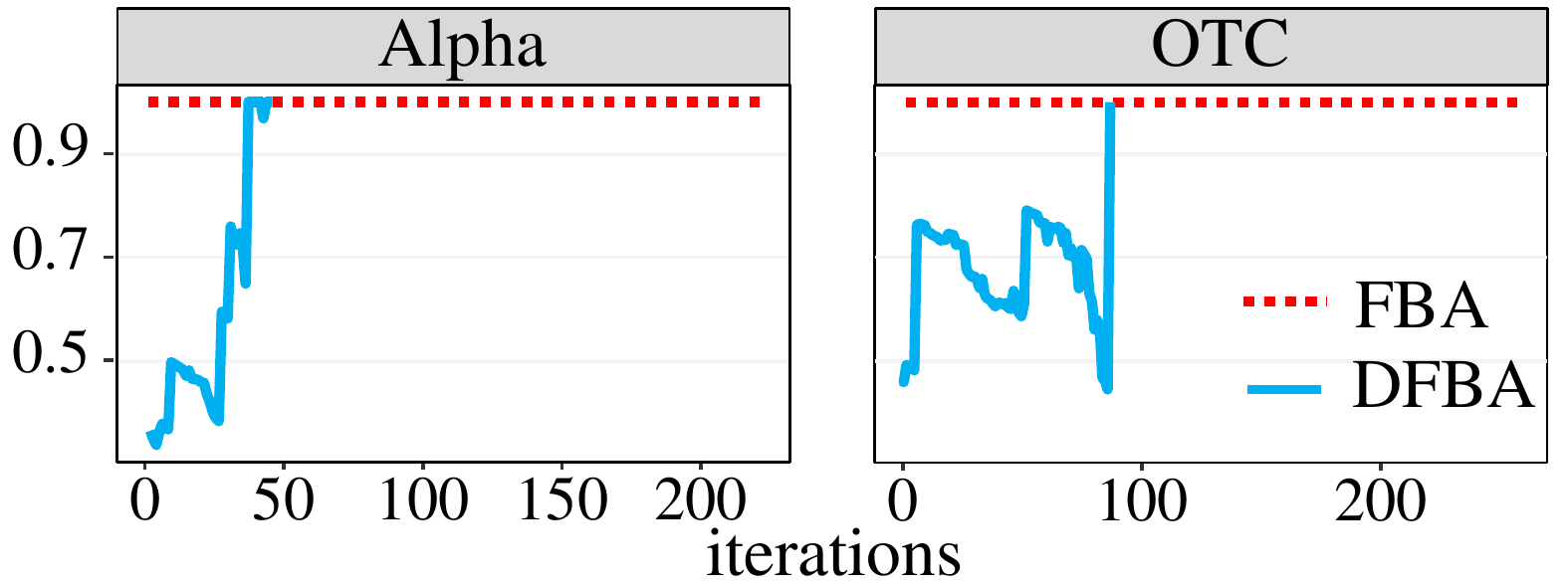}
\vspace{-0.4cm}
\caption{Ratio of computing followers in Alpha and OTC networks ($p=5, n=5$)}
\vspace{-0.4cm}
\label{fig:pruning}
\end{figure}

\spara{Effect on pruning strategy.}
In this section, we verified that our proposed {\DFBA} approach can avoid computing followers in Alpha and OTC networks. Figure~\ref{fig:pruning} reports the ratio of the nodes to compute followers among all the remaining nodes in {\DFBA} and {\FBA}. The {\DFBA} is verified to have less iterations than {\FBA}. More specifically, in Alpha and OTC networks, {\DFBA} has less than $30\%$ iterations compared with the number of iterations of {\FBA}. 
We also observed that {\DFBA} does not compute many followers compared with {\FBA} since the pruning strategy can affect the avoidance of computing all the followers.

\begin{figure}[ht]
\centering
\includegraphics[width=0.99\linewidth]{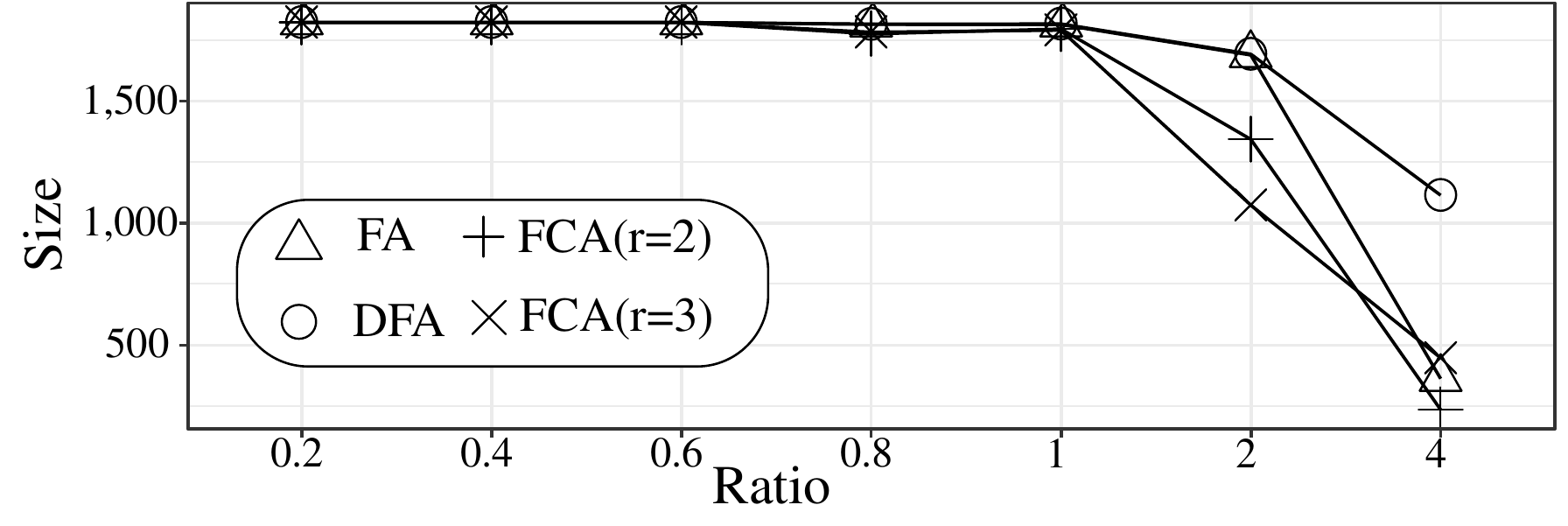}
\vspace{-0.3cm}
\caption{Varying the ratio}
\vspace{-0.4cm}
\label{fig:ratio_size}
\end{figure}
\spara{Effect on negative edge ratio.} By using the LFR benchmark dataset~\cite{lancichinetti2008benchmark}, we varied the number of injected negative edges. The value $\alpha=\frac{|-|}{|+|}$ is varied from $0.2$ to $4.0$ where $|+|$ is the number of positive edges and $|-|$ is the number of negative edges. We observed that when the ratio is smaller than $1$, it has no significant difference. However, when the values are larger than or equal to $2$, the size of the identified solution becomes decrease significantly. When $\alpha=4$, the {\FBA} contains only $19.7\%$ of the nodes compared with $\alpha=0.2$.

\begin{figure}[ht]
\centering
\includegraphics[width=0.99\linewidth]{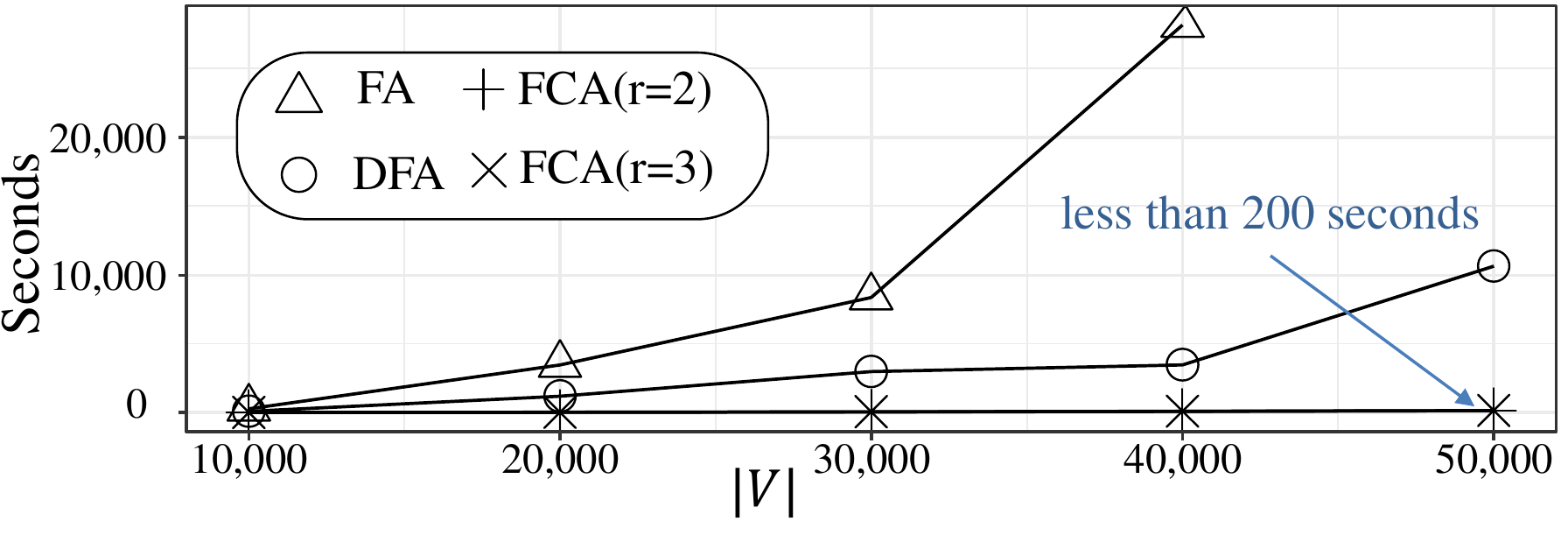}
\vspace{-0.1cm}
\caption{Scalability test ($p=3, n=3$)}
\vspace{-0.4cm}
\label{fig:scalability}
\end{figure}

\spara{Scalability test.} We used the LFR benchmark datasets~\cite{lancichinetti2008benchmark} to check the scalability of our proposed algorithms by varying the size of the nodes from $10K$ to $50K$. 
In the synthetic networks, we set the average degree as $5$, maximum degree as $50$, minimum community size as $10$, maximum community size as $500$, and community mixing parameter as $0.1$. We injected $\frac{|+|}{2}$ negative edges. 
We found that the proposed {\FBA} is slower than {\DFBA} since it must compute all the followers in every iteration. Even if the time complexity of {\DFBA} is the same as {\FBA}, due to the pruning strategy, we found that {\DFBA} is much faster than {\FBA}. We can verify that {\FCA} is over $60$ times faster than {\DFBA} when $|V|=50,000$ since it does not require computing the exact number of followers to find the best node to be removed.

\begin{figure}[ht]
\centering
\includegraphics[width=0.95\linewidth]{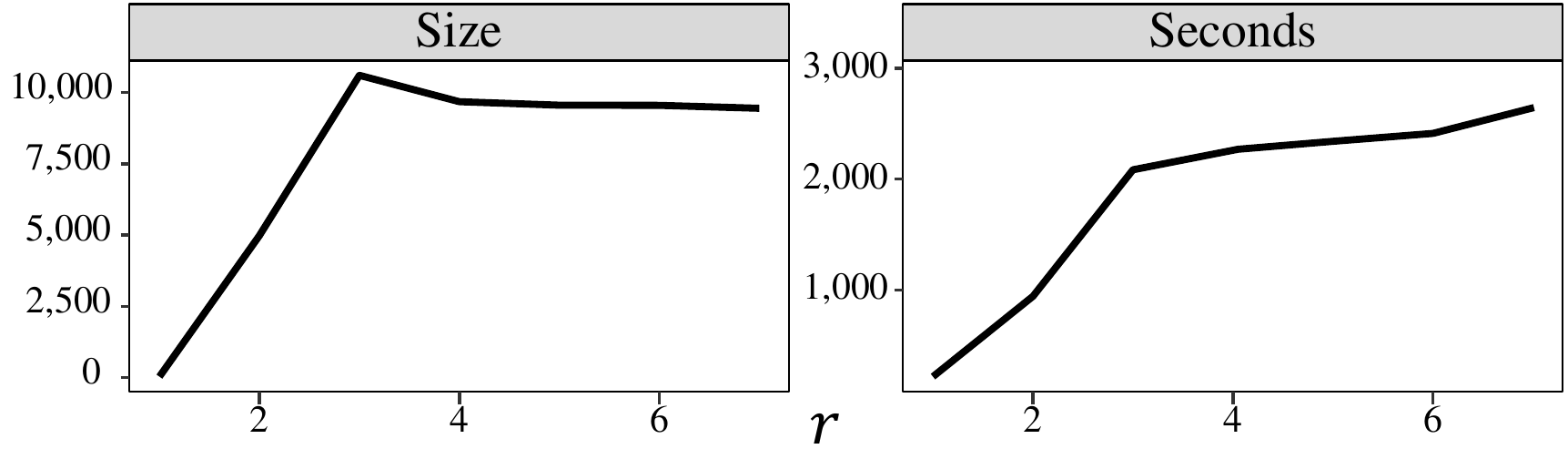}
\vspace{-0.4cm}
\caption{Varying radius $r$}
\vspace{-0.4cm}
\label{fig:radius}
\end{figure}

\spara{Varying the radius $r$ of {\FCA}.} In {\FCA}, the radius $r$ is a required user parameter. We check the result of $(p,n)$-core by varying the radius $r$ in the EP dataset. When $r$ becomes large, the running time is increased since the time complexity of HyperANF depends on the number of hops. One interesting point is the size of the solution. We observe that when $r=3$, the resultant subgraph returns the largest subgraph as a result. This implies that checking too much information may not be useful to estimate the number of followers. Therefore, in this paper, we set $r=2$ (for efficiency-focus mode) or $r=3$ (for effectiveness-focus mode).

\begin{figure}[ht]
\centering
\includegraphics[width=0.99\linewidth]{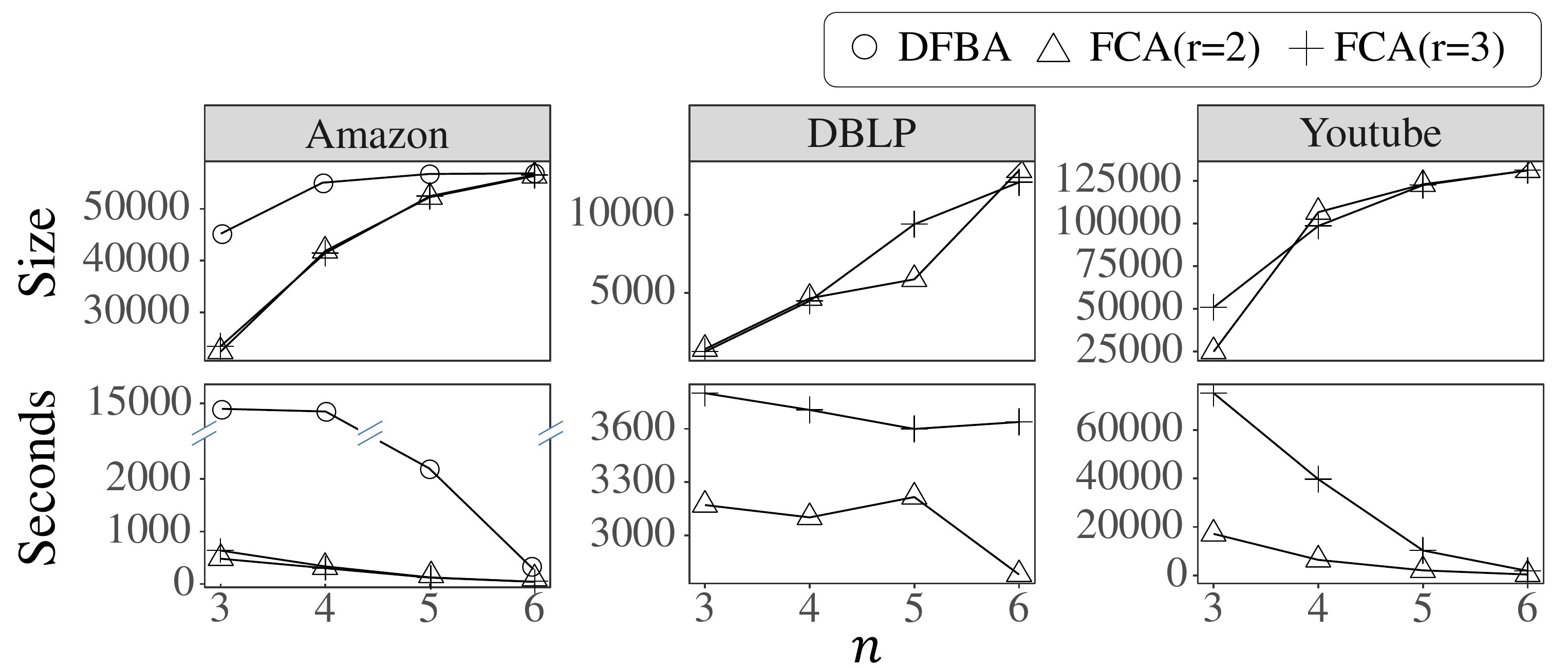}
\vspace{-0.2cm}
        \caption{Varying the variable $p$}
        \label{fig:syn_var_p} 
\end{figure}

\begin{figure}[ht]
\centering
\includegraphics[width=0.99\linewidth]{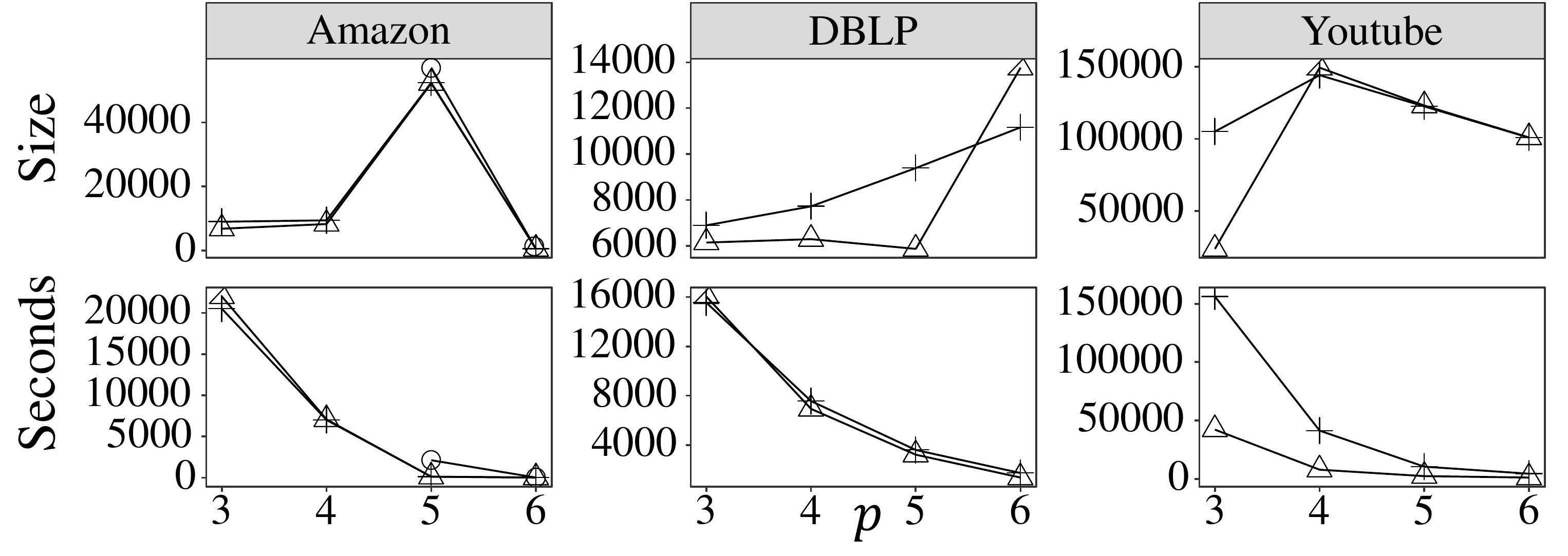}
\vspace{-0.2cm}
        \caption{Varying the variable $n$}
        \label{fig:syn_var_n} 
\end{figure}

\spara{Synthetic network.} 
In Figures~\ref{fig:syn_var_p} and \ref{fig:syn_var_n}, The parameter $n$ is fixed as $5$, and then the positive edge threshold $p$ varies from $3$ to $6$ (or $n$ from $3$ to $6$) to observe the trends in synthetic networks. We found that the trend of the result is the same as the experimental result of real-world networks. When the $p$ value becomes large, the resultant subgraph becomes small due to the stricter constraint, and when the $n$ value becomes large, the resultant subgraph becomes large due to the relaxed constraint. 
We noticed that a larger $r$ makes the algorithm slower since it must check additional structural information. Note that in some cases shown in Figure~\ref{fig:syn_var_n}, sometimes the resultant subgraph size becomes large when $p$ becomes large. This is because removing several nodes is beneficial in finding large-sized subgraphs since the negative edges are injected randomly ($p=5$ in Amazon, $p=6$ in DBLP, and $p=4$ in Youtube). However, if $p$-core returns a very small-sized solution, it may return a very small-sized result ($p=6$ in Amazon).

\begin{figure}[ht]
\centering
\includegraphics[width=0.95\linewidth]{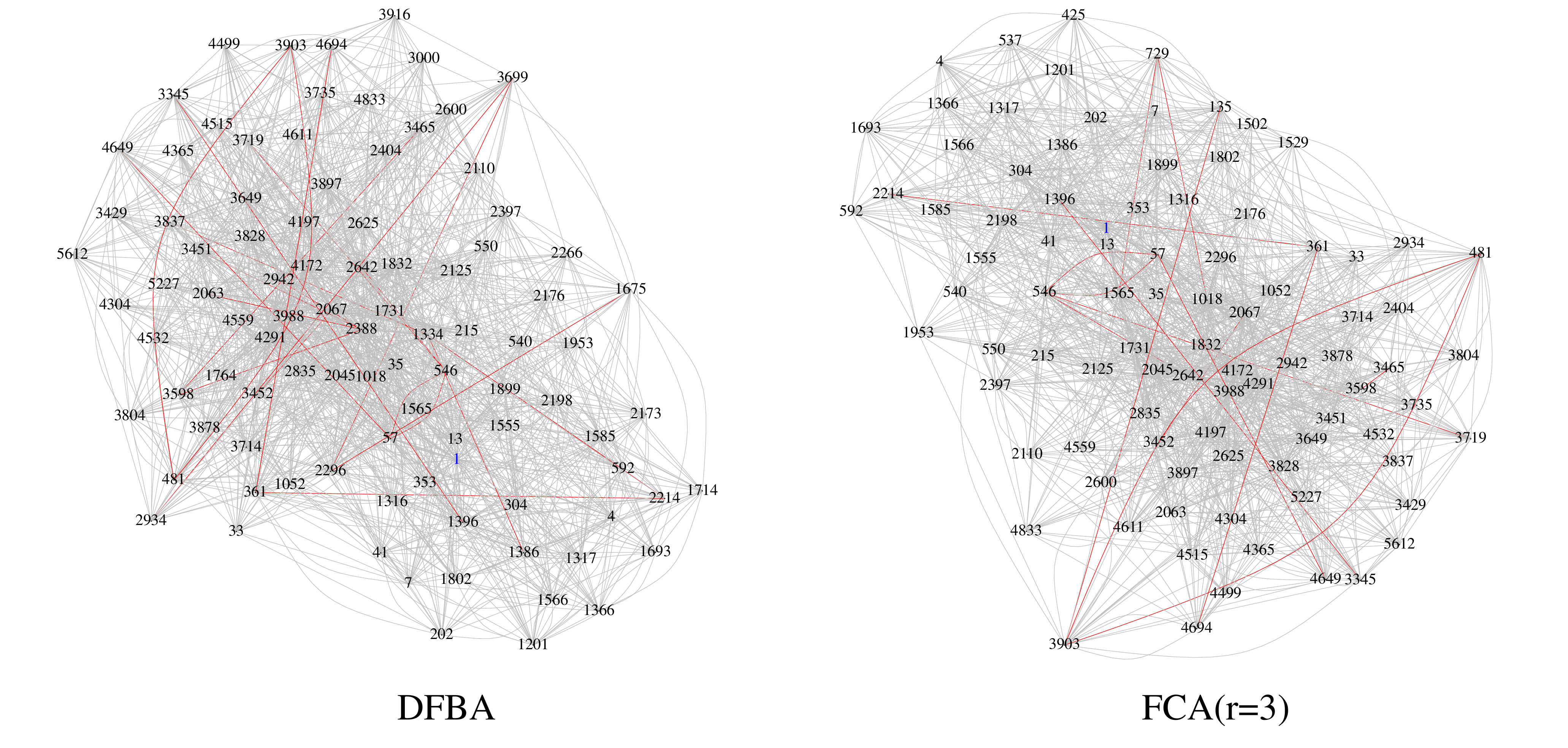}
\vspace{-0.2cm}
\caption{Identifying a community}
\label{fig:comm}
\end{figure}
\spara{Case study : Community Search.}
One famous community search model is to maximise the minimum degree~\cite{sozio2010community}. By setting the query node, we identified a community by fixing $n=5$ in the OTC dataset by utilising our proposed {\DFBA} and {\FCA} with $r=3$. Our query node is $v_1$. In Figure~\ref{fig:comm}, positive edges are grey-coloured, and the negative edges are red-coloured. The query nodes are densely connected to the other nodes in the community and there are few negative edges, and the degree of negative edges of the nodes is less than $5$. The statistics of the identified community are as follows. 

\begin{table}[ht]
\caption{Statistics of the identified communities}
\label{tab:cs_result}
\centering
\begin{tabular}{c|c|c|c|c}
\hline
     & $|V|$ & $|E|$   & Cluster Coefficient & \# of triangles \\ \hline \hline
{\DFBA} & 95  & 1,565 & 0.4099124           & 5,693           \\ \hline
{\FCA}(r=3)  & 91  & 1,430 & 0.4090217           & 4,833           \\ \hline \hline
\end{tabular}
\end{table}

\section{Related work}\label{sec:relatedwork}

In this section, we discuss three representative signed cohesive subgraph discovery problems and compare our cohesive subgraph model with them. 

First, In \cite{wu2020maximum}, Wu et al. study a signed ($k,r$)-truss problem.
They present balanced and unbalanced triangles to model ($k,r$)-truss. Specifically, given a signed network $G$ and two positive integers $k$ and $r$, a signed $(k,r)$-truss is a subgraph $S$ of $G$ which satisfies (1) $sup^+(e, S) \geq k$; (2) $sup^-(e, S) \leq r$; and (3) maximality constraint. Support $sup^+(e, S)$ (or $sup^-(e, S)$) indicates that the number of balanced (or unbalanced) triangles contain the edge $e$ in $S$. Wu et al. defines that a triangle is balanced if it contains an odd number of positive edges; otherwise, the triangle is unbalanced. 
In this paper, they show that the proposed $(k,r)$-truss is NP-hard to find the exact solutions and show that the different edge deletion orders may lead to different $(k,r)$-trusses. 
To find a solution, Wu et al. propose three algorithms: (1) the trivial approach which removes the edges based on the edge id; (2) the greedy edge removing approach which iteratively chooses an edge with the least followers; and (3) the triangle-based approach, which is a revised greedy algorithm by selecting the best edge which can remove many unbalanced triangles. Since this problem focuses on balanced and unbalanced triangles, it is different from our problem. 

Next, Giatsidis et al.~\cite{giatsidis2014quantifying} study the signed ($l^t, k^s$)-core problem in signed directed networks. Specifically, given a signed directed network $G$, and two parameters $k$ and $l$, it aims to find ($l^t, k^s$)-core which is a maximal subgraph $H$ of $G$ of which each node $v\in H$ has $deg_{in}^s (v, H)\geq k$ and $deg_{out}^t (v, H) \geq l$. Note that $s,t\in \{+, -\}$. As we have discussed before, it does not consider the internal negative edges; thus, the resultant ($l^t, k^s$)-core may contain many negative internal edges which leads to meaningless results.

Li et al.~\cite{li2018efficient} study the ($\alpha,k$)-clique problem. Specifically, given $\alpha$, $k$, and $r$, it aims to enumerate all maximal ($\alpha, k$)-cliques and find the top $r$ maximal cliques where ($\alpha,k$)-clique is a clique that satisfies negative and positive constraints.

\section{Conclusion}\label{sec:conclusion}

In this paper, we formulate and study a new core computation problem in signed networks by considering positive and negative edges simultaneously. We show that the solution of $(p,n)$-core  is not unique and NP-hard to find an exact solution; thus, we propose three algorithms to solve the problem. Finally, by using real-world and synthetic networks, we demonstrate the superiority of our algorithms. As future work, we can consider the dynamic signed network to find $(p,n)$-core by updating the existing solutions.  

\clearpage

\bibliographystyle{ACM-Reference-Format}
\bibliography{sample-base}

\end{document}